\def\BibTeX{{\rm B\kern-.05em{\sc i\kern-.025em b}\kern-.08em
    T\kern-.1667em\lower.7ex\hbox{E}\kern-.125emX}}
\tikzset{block/.style={draw,thick,text width=5em,minimum height=1mm,align=center},
         line/.style={-latex}
}
\definecolor{mycolor}{rgb}{0.8,0.6,0.7}
\definecolor{mycolor1}{rgb}{0.8, 0.6, 0.5}
\definecolor{mycolor2}{rgb}{0.529,0.807,0.89}
\definecolor{mycolor3}{rgb}{0.3, 0.5, 0.6}
\newcommand{\proto}{FASTEN}
\newtheorem{claim}{Claim}
\newcommand{\sd}[1] { \textcolor{blue}{{\bf SD: }{``\em #1''}}}
\crefname{property}{property}{properties}
\Crefname{property}{Property}{Properties}
\newtheorem{definition}{Definition}
\title{FASTEN: Fair and Secure Distributed Voting Using Smart Contracts}
\author{
Sankarshan Damle and Sujit Gujar \\
  Machine Learning Lab\\
  \textit{International Institute of Information Technology} (IIITH)\\
  Hyderabad, India \\
  \texttt{sankarshan.damle@research.iiit.ac.in; sujit.gujar@iiit.ac.in} \\
   \And
 Moin Hussain Moti \\
The Hong Kong University of Science and Technology\\
  Hong Kong \\
  \texttt{mhmoti@cse.ust.hk} \\
}
\begin{document}
\maketitle

\begin{textblock}{15}(3.25,1)
\noindent\scriptsize A version of this paper will appear in the \textit{IEEE International Conference on Blockchain and Cryptocurrency} (IEEE ICBC 2021). This is a full version.
\end{textblock}

\begin{abstract}
Electing democratic representatives via voting has been a common mechanism since the 17\textsuperscript{th} century. However, these mechanisms raise concerns about fairness, privacy, vote concealment, fair calculations of tally, and proxies voting on their behalf for the voters. Ballot voting, and in recent times, electronic voting via electronic voting machines (EVMs) improves fairness by relying on centralized trust. Homomorphic encryption-based voting protocols also assure fairness but cannot scale to large scale elections such as presidential elections. In this paper, we leverage the blockchain technology of distributing trust to propose a smart contract-based protocol, namely, \proto.  There are many existing protocols for voting using smart contracts. We observe that these either are not scalable or leak the vote tally during the voting stage, i.e., do not provide vote concealment.
In contrast, we show that \proto\ preserves voter's privacy ensures vote concealment, immutability, and avoids double voting. We prove that the probability of privacy breaches is negligibly small. Further, our cost analysis of executing \proto\ over Ethereum is comparable to most of the existing cost of elections. 
\end{abstract}


\section{Introduction}

Elections are fundamental to democratic governance. Since direct democracy - a form of government in which political decisions are made directly by the entire body of qualified citizens - is impractical, societies select representatives for governance. Consequently, elections have been a common mechanism for modern representative democracy since 17\textsuperscript{th} century \cite{election-britannica}. They make it possible to include every eligible individual in the decision-making process by registering its vote into the system. A \emph{fair election} is possible only when the voter can freely vote for its desired preference.

One must also ensure an agent's participation in the voting process is hidden. This can be achieved by eliminating the \textit{link} between the voter and its vote, i.e., \emph{anonymous} voting. In order to design such a fair election with anonymous voting, i.e., fair and secure election (FSE), we first define the following essential properties. 
\begin{enumerate}[leftmargin=*]
\item \textbf{Voter Anonymity (VA).} A vote cannot be traced back to the voter either during or after the election.
\item \textbf{Vote Concealment (VC).} The vote's value should remain hidden from the system (voters, candidate, election commission). This in turn ensures that the vote tally remains a mystery to the system until the voting window has expired.
\item \textbf{Vote Immutable (VI).} Once a voter casts its vote, it should be impossible to alter it to any other vote by anyone.
\item \textbf{Double Voting Inhibition (DVI).} A voter should be allowed to vote only once in a specific election. 
\end{enumerate}

\smallskip
\noindent\textit{FSE Overview.} Towards FSE, the most traditional voting method is \emph{paper ballots}. It partially ensures anonymity, vote concealment, and vote immutability. The major drawback of ballot-based voting is that it involves tiresome manual work in counting the votes. Along with the risk of unintentional and intentional human-error involved, the non-durability of paper and lack of a robust mechanism to avoid double voting are some of the other challenges involved with this system. 

Election through \emph{electronic voting machines} (EVM)s is a technological upgrade over the paper ballot system. EVMs provide voter anonymity that does not take voter ID as a parameter. But, it fails at guaranteeing vote immutability. This is because the voter needs to trust the company that provides the EVM software for vote concealment and vote immutability. They also entrust the company with shipping the EVMs with the correct version of the firmware, and thus, the EVM remains a black box to the voter. Besides, the double voting inhibition problem is still there.

Micali \emph{et al.} \cite{micali14} propose a protocol for secure auctions that is also applicable for voting. However, the body conducting the elections, \textit{election commission} (EC), will know all the votes post the voting stage. To overcome the ``centralized trust" placed on EC, the authors propose an expensive zero-knowledge proof of the result. Consequently, the protocol is also not scalable. Adida \cite{adida08} proposes a most popular scheme \emph{Helios}, which relies on the security of one server and is not viable for nationwide elections.  
Thus, there is a need to look for a completely different approach to conduct an FSE.

\smallskip
\noindent\textit{FSE over Blockchain.} With Ethereum~\cite{buterin14ethereum}, we observe that blockchain can not only be used to solve the problem of designing a cryptocurrency but can also be used to implement \emph{smart contracts}~\cite{buterin14ethereum}. A smart contract allows blockchain to establish an interactive platform for $n$ parties. Such a contract enforces the outcome of any event through a set of rules. These rules correspond to a programming language understandable to the execution system. The key concept here is distributing trust rather than relying on a single party. Thus, we explore ways to leverage such a distributed trust to conduct an FSE.

The important steps in the voting procedure for FSE are: (i) \textit{voter authentication}, i.e., a person claiming to be a voter should be an eligible voter; (ii) \textit{vote registration}, which preserves the privacy of the voter as well as its vote; (iii) \textit{outcome verification} that counts the tally of votes in a verifiable manner. Zhao \emph{et al.} \cite{zhao15vote} propose a voting protocol based on \emph{Commit-Publish} mechanisms that also leverages smart contract. As the authors' main goal is boardroom voting, it does not address step (i). It solves steps (ii) and (iii). To the best of our knowledge, in a plethora of voting schemes over blockchain, except \cite{yu18}, no protocol satisfies all the four desirable properties of a fair election. The challenge with \cite{yu18} is scalability as it is Zcash-based, which is \textit{considerably} slower than Bitcoin.

\smallskip

\noindent\textit{Our Approach.} We propose a novel protocol for FSE, namely, \proto: \textit{FAir and Secure disTributEd votiNg}. We partially rely on EC for authentication, which issues a random but \textit{unique} token for each voter after authenticating the voter. The token is unique to the voter and the particular election. If the voter tries to obtain multiple tokens for the same election - it will receive the same token. Therefore, \proto\ is resistant to \emph{Sybil} attacks. This authentication is similar to several secure applications over blockchain~(e.g., \cite{lu2018zebralancer,duan2019aggregating,damleaamas,damlearxiv}). In this work, we assume that EC does not store the link between a voter and its token\footnote{Such trusted third party authentications also have a close parallel with the Zcash Parameter Generation~\cite{zcashceremony}. These links thus correspond to ``toxic waste" -- to be destroyed.}. Next, the smart contract considers tokens as an eligibility to vote granted by EC. After this, each voter registers its encrypted vote and token to the smart contract. The smart contract\footnote{In this work we present  \proto\ as an Ethereum-based smart contract. The protocol can also be easily conducted as through computational logic over other distributed platforms like \textit{Hyperledger Fabric}~\cite{androulaki2018hyperledger} or \textit{Quorum}~\cite{baliga2018performance}.  } holds the hashes of all the tokens that are issued by EC. It computes the hash of the token registered by the voter and checks if it is present in the database. Once the entry is confirmed, the encrypted votes are registered in the voter database. After the vote casting window expires, the smart contract decrypts all the votes and computes the tally.

Since our protocol deploys smart contracts based on blockchain, one may also implement it as a \emph{Decentralized Application} (DApp). DApps comprise a friendly UI for any smart contract, thereby allowing layperson to interact with said contracts. Thus, \proto\ helps improve fairness in elections, i.e., designing FSE and improving voter participation. 

With \proto, we show that a straightforward protocol will achieve an FSE. We believe that the simple design of our protocol is desirable and will benefit the end-user. This simplicity is in contrast with sophisticated protocols which deploy heavy cryptographic and security primitives to achieve FSE. These protocols, especially how they reach the desired privacy, are challenging to explain to a layperson. These limit their use in general elections where participating voters are in millions. We outline some of them next.


\section{Related Work}
There have been few attempts before to design a voting protocol based on blockchains but they all use \emph{Commit-Publish} mechanisms to conceal the vote tally. A \emph{Commit-Publish} mechanism requires voters to submit a deposit before committing to a vote and are refunded the deposit when they publish their vote. Both the Open Vote Network Protocol implemented by McCorry \cite{mccorry17scvote} and the bitcoin based protocol by Zhao \cite{zhao15vote} use this mechanism. While this mechanism could be used for boardroom voting, it cannot be used for mass election like general presidential election of nations because most voters would be reluctant to submit the deposit. This is because, in a population of millions voters will not consider their vote to be of much importance to begin with. In addition, once the voters commit, they have to participate in the protocol again to reveal their commitment. Note that, several nations are struggling to bring voters to voting booths to cast their vote when they just have to come once. Thus, we believe that it is unrealistic to believe that a voter will choose to participate twice in the same election. With \proto, we require the voters to come on-chain, i.e., online on blockchain, only once to register their votes and without any deposit.

In other works, FollowMyVote \cite{followmyvote} uses trusted authorities to ensure voter anonymity. The voters cast their votes in plain-text therefore risking the concealment of the vote tally. In contrast, voters in \proto\ are required to submit encrypted votes, ``cipher-texts", thereby concealing their votes. Our encryption-decryption mechanism ensures the concealment of the vote tally until the vote casting window expires. Due to space constraints, we present several other works through Table~\ref{tab:comparison} to place \proto\ with respect to the existing literature in FSEs. We also refer the interested reader to \cite[Table 1]{survey2020} for a more comprehensive comparison of secure voting protocols over blockchain.
We next present the preliminaries required for the design and analysis of \proto.

\begin{table}[!t]
\begin{center}
\renewcommand*{\arraystretch}{1.25}
\begin{tabular}{cccccc}
\hline
\textbf{Paper} & \textbf{VA} & \textbf{VC} & \textbf{VI} & \textbf{DVI} &\textbf{Scalable} \\ \hline\hline
\cite{micali14} & \multirow{2}{*}{\checkmark} &  \multirow{2}{*}{$\times$}  & \multirow{2}{*}{\checkmark} & \multirow{2}{*}{\checkmark}&  \multirow{2}{*}{$\times$}\\
(No Blockchain) &  &  &  &  &  \\ \hline
\cite{adida08} & \multirow{2}{*}{\checkmark} & \multirow{2}{*}{\checkmark} &\multirow{2}{*}{\checkmark} & \multirow{2}{*}{\checkmark} &  \multirow{2}{*}{$\times$}\\ 
(No Blockchain) &  &  &  &  &  \\ \hline
\cite{mccorry17scvote,tarasov17,zhao15vote} & \checkmark & \checkmark & \checkmark & \checkmark & $\times$ \\ \hline
\cite{bellini19,followmyvote,hjalmarsson18,yi19} & \checkmark & $\times$ & \checkmark & \checkmark & \checkmark \\ \hline
\cite{yu18} & \checkmark & \checkmark & \checkmark &  \checkmark & $\times$ \\ \hline
\textit{\proto} & \checkmark & \checkmark & \checkmark & \checkmark & \checkmark\\ 
\end{tabular}
\label{tab:comparison}
\caption{Comparison of Different secure voting protocols}
\end{center}
\end{table}

\section{Model and Preliminaries}
The main stakeholders of any election are the voters, the candidates and the election commission (EC). EC represents the governing body of the election. Once an election is announced, interested candidates need to register their candidature with EC before a certain deadline which we denote as $t_{ecr}$. It is EC's responsibility to ensure fair candidature registration. For the time period between the beginning and end of token distribution, i.e., $t_{btd}$ and $t_{etd}$, EC issues tokens to the voters after authenticating them. Actual vote casting begins at time $t_{bvc}$ and ends at $t_{evc}$. In \proto, the voters submit their encrypted votes over the Ethereum-based smart contract. We show how EC (or any other interested party) can get the vote tally using the smart contract. We refer to the procedural methods of the overall protocol which are on the smart contract as \emph{on-chain methods} and the remaining procedures as \emph{off-chain methods}. Figure \ref{fig:voting} illustrates the temporal aspect of \proto. 

\begin{figure*}[!t]
\centering \scriptsize
\begin{tikzpicture}

   \node[block, line width=0.1mm, minimum height=4em, minimum width=7.5em, fill=red!10!white] (e1) at (-26,0) {\textbf{Apply For Candidature}};      
   \node[block, line width=0.1mm, minimum height=4em, minimum width=7.5em, fill=red!10!white] (e2) at (-24,0) {\textbf{Backout}};  
   \node[block, line width=0.1mm, minimum height=4em, minimum width=7.5em, fill=red!10!white] (e3) at (-22,0) {\textbf{Get Token}};
   
    \node[draw=none] (P2) at (-20,-4.25) {}; 
   
   \draw[line, line width=0.05mm, dashed, -] (P2.south) -- ([xshift=2cm,yshift=2cm]e3.south); 
 
 
    \node[block, line width=0.1mm, minimum height=4em, minimum width=7.5em, fill=blue!10!white] (f1) at (-18,0) {\textbf{Get Encryption Key}};
    \node[block, line width=0.1mm, minimum height=4em, minimum width=7.5em, fill=blue!10!white] (f2) at (-16,0) {\textbf{Get Candidate List}};
    \node[block, line width=0.1mm, minimum height=4em, minimum width=7.5em, fill=blue!10!white] (f3) at (-14,0) {\textbf{Cast Vote}};
    \node[block, line width=0.1mm, minimum height=4em, minimum width=7.5em, fill=blue!10!white] (f4) at (-12,0) {\textbf{Get Decryption Key}};
    \node[block, line width=0.1mm, minimum height=4em, minimum width=7.5em, fill=blue!10!white] (f5) at (-10,0) {\textbf{Tally Vote}};
        
  

    \node[circle, line width=0.1mm, draw=black, fill=red, text=black] (Can1) at (-26,-3.5) {C};
  
  \draw[line, line width=0.05mm] (Can1.north)-- (e1.south) node[above, midway, rotate=90]{ID Proof}; 
  \draw[line, line width=0.05mm] (e1.south)-- (Can1.north) node[above, midway, rotate=270]{Unique ID};

  \node[circle, line width=0.1mm, draw=black, fill=red, text=black] (Can2) at (-24,-3.5) {C};
  
  \draw[line, line width=0.1mm] (Can2.north)-- (e2.south) node[above, midway, rotate=90]{ID Proof}; 

 \node[draw=none] (t1) at (-27,-4.25){};
 
\node[draw=none] (t2) at (-23.05,-4.25){};
 \draw[line, line width=0.05mm] (t1.north) -- (t2.north)node[above, midway]{$t_{bcr}-t_{ecr}$};
 \draw[line, line width=0.05mm] (t2.north) -- (t1.north)node[above, midway]{};

  \node[circle, line width=0.1mm, draw=black, fill=yellow, text=black] (Can3) at (-22,-3.5) {V};
  \draw[line, line width=0.05mm] (Can3.north)-- (e3.south) node[above, midway, rotate=90]{ID Proof}; 
  \draw[line, line width=0.05mm] (e3.south)-- (Can3.north) node[above, midway, rotate=270]{Unique Token};
\node[draw=none] (t3) at (-23,-4.25){};
 \node[draw=none] (t4) at (-21,-4.25){};

  \draw[line, line width=0.05mm] (t3.north) -- (t4.north)node[above, midway]{$t_{btd}-t_{etd}$};
 \draw[line, line width=0.05mm] (t4.north) -- (t3.north)node[above, midway]{};

     \node[circle, line width=0.1mm, draw=black, fill=brown, text=black] (Can4) at (-18,-3.5]) {$\mathcal{A}$};
  \draw[line, line width=0.1mm] (Can4.north)-- (f1.south) node[above, midway, rotate=90]{Encryption Key}; 
  
    \node[circle, line width=0.1mm, draw=black, fill=brown, text=black] (Can5) at (-16,-3.5) {$\mathcal{A}$};
  \draw[line, line width=0.1mm] (Can5.north)-- (f2.south) node[above, midway, rotate=90]{Get Candidate List}; 
  
    \node[circle, line width=0.1mm, draw=black, fill=yellow, text=black] (Can6) at (-14, -3.5) {V};
  \draw[line, line width=0.1mm] (Can6.north)-- (f3.south) node[above, midway, rotate=90]{Token \& Encrypted Vote}; 
  
 \node[draw=none] (t5) at (-19,-4.25){};
 \node[draw=none] (t6) at (-13.05,-4.25){};
 \draw[line, line width=0.05mm] (t5.north) -- (t6.north)node[above, midway]{$t_{bvc}-t_{evc}$};
 \draw[line, line width=0.05mm] (t6.north) -- (t5.north)node[above, midway]{};

   \node[circle, line width=0.1mm, draw=black, fill=brown, text=black] (Can7) at (-12,-3.5) {$\mathcal{A}$};
  \draw[line, line width=0.1mm] (Can7.north)-- (f4.south) node[above, midway, rotate=90]{Encryption Key}; 
  
    \node[circle, line width=0.1mm, draw=black, fill=brown, text=black] (Can8) at (-10,-3.5) {$\mathcal{A}$};
  \draw[line, line width=0.1mm] (Can8.north)-- (f5.south) node[above, midway, rotate=90]{List of Votes}; 
  
 \node[draw=none] (t7) at (-13,-4.25){};
 \node[draw=none] (t8) at (-9,-4.25){};
 \draw[line, line width=0.05mm] (t7.north) -- (t8.north)node[above, midway]{$t_{bvt}$};
 \draw[line, line width=0.05mm] (t8.north) -- (t7.north)node[above, midway]{};
  
 \node[draw=none] (T1) at (-26,-5){};
 \node[draw=none] (T2) at (-10,-5){};
 \draw[line, line width=0.5mm] (T1.north) -- (T2.north)node[above, midway]{\textbf{\proto\ Protocol Timeline}};

    \node[draw=none] (x) at (-23.75,1){OFF-CHAIN METHODS};
    
    \node[draw=none] (x) at (-14,1){ON-CHAIN METHODS};

\end{tikzpicture}
\caption{\label{fig:sec_comaprison} Illustration of the protocol timeline in \proto. Here, C: Candidate, V: Voter, and $\mathcal{A}$: Other agent}\label{fig:voting}
\end{figure*}
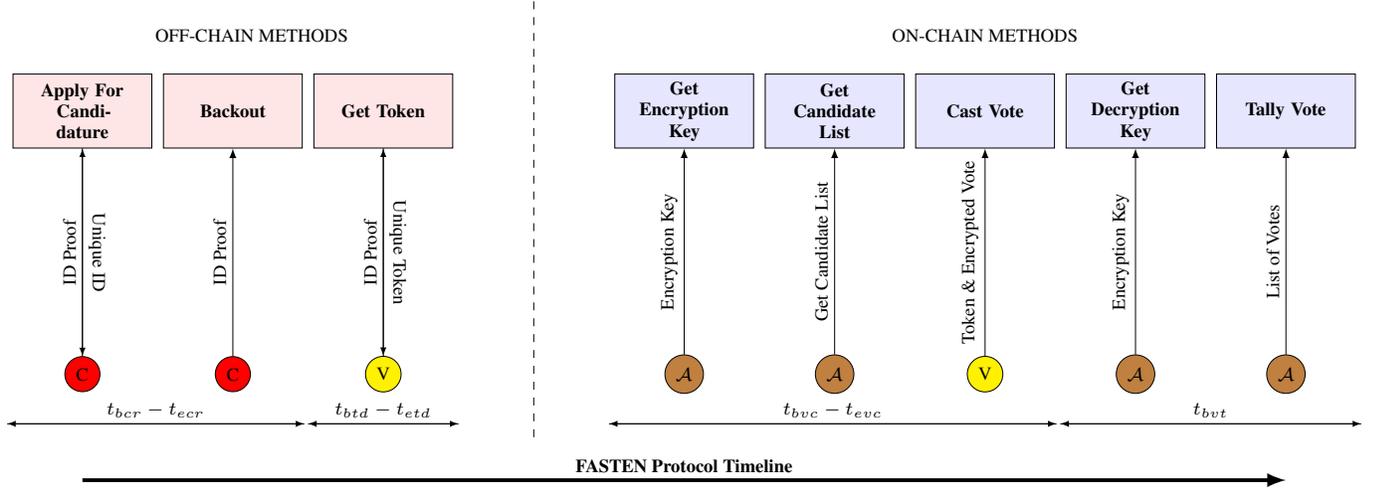
\normalsize
        %

We use an Ethereum-based smart contract for on-chain methods which in turn relies on the blockchain technology. Hence, we first summarize blockchain technology and Ethereum-based smart contracts in the following subsections.  

\subsection{Blockchain}
A blockchain can serves as an open, append-only distributed ledger that can record transactions between two parties efficiently, verifiably and in a permanent way. 
Blockchain security method comprise public-key cryptography. A public key (a long, random-looking string of numbers) is an address on the blockchain. Value
tokens sent across the network are recorded as belonging to that address. 
Data stored on the blockchain is generally considered incorruptible. Therefore, blockchains are secure by design. The interested readers are referred to \cite{narayanan2016bitcoin} for more details.

To maintain this ledger, blockchain comes with stack based programming that triggers transactions; and uses ad-hoc message passing and distributed networking. Its network does not have centralized points of vulnerability that computer hackers can exploit nor any central point of failure. 
This makes blockchain potentially suitable for the recording of events, medical records, and other management activities, such as identity management, transaction processing, and documenting provenance \cite{wiki:blockchain}. By storing data across its network, the blockchain eliminates the risks that come with data being held centrally. This leads to \emph{smart contracts}. In particular, we use Ethereum-based smart contracts.

\subsection{Ethereum-based Smart Contracts}

Ethereum is an open blockchain platform that allows anyone to build and use \emph{decentralized applications} (DApps) that run on blockchain technology.
Ethereum is not limited to predefined operations, as in the case of Bitcoin, but allows execution of user-defined operations written in computer codes/scripts on the blockchain through \emph{Ethereum Virtual Machine} (EVM). Thus, Ethereum supports the execution of \textit{smart contracts}.


The concept of \emph{smart contracts} was introduced in 1994 by Nick Szabo \cite{szabo96sc}. A smart contract is similar to a legal contract in adhering to strict rules and consequences, stating the obligations, benefits and penalties which may be due to either party in various circumstances. In addition to all this, it can take information as an input, process that information using the set of rules defined in the contract, and take action accordingly.

A smart contract in Ethereum is executed on EVM. For every executed instruction, there is a specified cost, expressed in the number of \textit{gas} units. Gas is the name for the execution fee that senders of transactions need to pay for every operation made on an Ethereum blockchain. Gas and ether are decoupled deliberately since units of gas align with computation units having a natural cost, while the price of ether generally fluctuates as a result of market forces. The Ethereum protocol charges a fee per computational step that is executed in a contract or transaction to prevent deliberate attacks and abuse on the Ethereum network \cite{ethereum_docs}.

We use the Ethereum smart contracts to enforce \proto. We now explain important aspect in designing of voting protocol to improve fairness in elections.

\subsection{\proto: Our Approach}

As a smart contract resides on blockchain, we inherent the decentralized feature of blockchain with it in our protocol. In designing a FSE, it is mandatory to verify the eligibility of a voter before allowing it to vote. This verification process requires examining the voter's real-world identity. Examining the same on-chain risks the privacy/anonymity of the voter. This is because it links the real-world voter identity with the encrypted vote. Therefore, there is a need for separating voter ID from encrypted vote. For the same purpose, we take help from the respective EC to distribute tokens which separates voter's real ID from vote. The tokens are randomly distributed off-chain (off the blockchain, that is not part of the contract) to voters after verifying their identity and eligibility. Since time is of utmost importance in an election, the voting protocol strictly follows a timeline of events to avoid any liabilities. Every method of the contract has a time bound out of which the contract \textit{reverts} the call.

\smallskip
\noindent\textit{Role of Election Commission (EC).}
To ensure that only eligible candidate gets to vote, we take help from EC to verify voter eligibility. Every data on the smart contract is on blockchain and thus is publicly available. This implies that the voter database cannot be made public to preserve voter anonymity. Also verifying voter ID on blockchain and registering vote through that ID violates voter anonymity. To resolve these, we let EC verify voter ID off-chain and distribute random tokens to the voters as a grant to vote. Note that every voter must be given only one token. The smart contract assumes that EC distributes the token truthfully, securely and privately without keeping any trace of voter ID with the token. The smart contract holds a pre-stored database of hashes of the token issued by the EC. Since the token is distributed by verifying the voter eligibility, the smart contract need only check if the token is genuine and not fake. It does so by checking if the hash of the token exist in the token hash database.


Also, the voters in \proto\ are not bothered with any commitments through deposits as in previously proposed voting mechanisms based on blockchain. In contrast, they are simply required to get the encryption key and ID, encrypt the vote and send it through the smart contract along with their token. The overhead of submitting the decryption keys relies with a different set of agents. We refer to such a third party as a \textit{Warden}. These agents store the decryption key safely instead of the voters, the process of which is explained later. Each warden holds a key which corresponds to a batch of vote. These wardens represent a distributed trust system and are incentivized to act honestly in the system. 

We also assume the candidate registration to be handled by EC and the candidate list to be provided by them in advance.

\smallskip
\noindent\textit{Role of Smart Contracts.} EVM(s) which are currently used for conducting election systems are black box to the voter because the software used inside EVM cannot be inspected by them. This raises concerns regarding code tempering by EVM. Unlike the code in EVM, a smart contract's code resides on a public distributed ledger, i.e., blockchain. This enables anyone to inspect the method's deployed by a smart contract. Also since smart contract resides on blockchain, every data/transaction that is part of the contract also becomes a part of the blockchain, which ensures the permanence of the history of the vote done by the voter. Thus, it results in the \textit{verifiability} of the vote counting.

After time $t_{evc}$, when the decryption key is released public, it can be used to decrypt the encrypted votes stored safely on blockchain and compare with the result achieved by the smart contract. Since the encryption keys and corresponding decryption keys have to be different by the design of the protocol, we use an asymmetric cryptography. Any asymmetric cryptographic system will work for our protocol such as 
ElGamal cryptography system~\cite{elgamal1985public}. With these as a backdrop, we now formally present our FSE protocol, \proto.

\section{\proto: Protocol Design}
In \proto, we use Ethereum-based smart contracts to enforce the protocol. The contract maintains strict adherence to the time constraints. Therefore each method in the contract is bound by a time window outside of which the contract reverts the call made to it. These time-constraints are defined and explained in Table~\ref{tab:TC}.

%
\begin{table}[!t]
    \centering\renewcommand*{\arraystretch}{1.25}
    \begin{tabular}{cc}
    \textbf{Notation}     &  \textbf{Definition} \\
    \hline
    \multirow{2}{*}{$t_{bcr}$}     &  time-stamp for beginning the candidature registration \\
    &  (Candidates can begin registration after this time) \\
    \multirow{2}{*}{$t_{ecr}$}     &  time-stamp for ending the candidature registration \\
    &  (All candidates must register before this time) \\
        \multirow{2}{*}{$t_{btd}$}     &  time-stamp for beginning the token distribution \\
    &  (Voters can collect their tokens after this time) \\
            \multirow{2}{*}{$t_{etd}$}     & time-stamp for ending the token distribution \\
    &  (All voters must collect their token before this time) \\
                \multirow{2}{*}{$t_{bvc}$}     & time-stamp for beginning the vote casting \\
    &  (Voters can begin casting their vote after this time) \\
                    \multirow{2}{*}{$t_{evc}$}     & time-stamp for ending the vote casting \\
    &  (All voters must cast their vote before this time) \\
                        \multirow{2}{*}{$t_{bvt}$}     & time-stamp for beginning the vote tally \\
    &  (Anyone can ask for vote tally only after this time) \\
    \end{tabular}
    \caption{Time Constraints in \proto}
    \label{tab:TC}
\end{table}
%

%
\begin{table}[!t]
    \centering\renewcommand*{\arraystretch}{1.25}
    \begin{tabular}{cc}
    \textbf{Variables}     &  \textbf{Definition} \\
    \hline
     \textit{block.timestamp} & Current time-stamp \\
    \textit{candList} & List of the registered candidates \\
     \textit{numKeys} & Total number of the encryption keys available\\
     \textit{idCounter} & To allocate the encryption-decryption key pair \\
     \textit{enKeys} & List of all encryption keys provided \\
      \textit{deKeys} & List of decryption keys aggregated by the wardens \\
      \textit{hashDatabase} & Database of the hashes of all the tokens \\
      \textit{voteBatch} & Database to store votes \\
      \textit{securityAmt} & Security amount wardens deposit before registration \\
      \textit{refundAmt} & Array to store amount to be refunded to the wardens \\
    \textit{reward} & Additional reward to the wardens \\ 
    \textit{wardens} & Mapping from warden address to key id \\
    \textit{sampleText} & Sample file to authenticate key pair(s) \\
     \textit{tallyDone} & Flag to ensure votes are only counted once\\
    \end{tabular}
    \caption{Variables in \proto}
    \label{tab:Var}
\end{table}
%

\proto\ also uses some predefined variables and databases, which are fed into the smart contract beforehand. We describe them as follows. 

\smallskip
\noindent\textit{\proto: Variables.} Table~\ref{tab:Var} provides the variables in \proto\footnote{For older versions of Solidity, i.e., $< 0.7.0$, \textit{block.timestamp} corresponds to \textit{msg.now}.}. From Table~\ref{tab:Var}, \textit{hashDatabase} is basically a mapping from hash string to boolean such that $True$ value means the hash is present in database. It is pre-populated beforehand. Further, \textit{voteBatch} is a double array where $voteBatch[i]$ represents the array of encryption votes with encryption id $i$. The list of encrypted votes gets aggregated in the course of the election. 
These variables are part of the underlying methods deployed in \proto. We now describe these methods next.



\subsection{\proto: Underlying Methods}
In \proto, for improved scalability and efficiency, we consider a \textit{hybrid} model, i.e., some of the underlying methods are on-chain while others are off-chain. Later, we describe the control flow of the protocol using these methods.

\subsubsection{Off-Chain Methods} These are the methods that are not part of the smart contract. We formally present them next.
\begin{itemize}
\item \textbf{ApplyForCandidature.} This method allows eligible candidates to register themselves as the candidate for election. For the implementation, we suggest the use of a biometric scanner for ID authentication. The method generates a unique voting ID for the candidate. It can only be used during $(t_{bcr}, t_{ecr})$.

\item \textbf{Backout.} Registered candidates can use this to \textit{backout} from the election. The candidates must first authenticate their identity using the biometric scanner. The method can also only be used during $(t_{bcr}, t_{ecr})$.

\item \textbf{GetToken.} This method is not part of the smart contract but a facility provided outside the contract to get the token privately. This method will distribute a unique token which the voter can use to cast vote. The token will only be distributed after a succesful ID authentication through biometric. It can be used only in the duration $(t_{btd}, t_{etd})$.
\end{itemize}

\subsubsection{On-Chain Methods}: These set of methods comprise those that are part of the smart contract. We classify them under two subsets: ``General public" methods and ``Warder Specific" methods. This categorization is based on whom all can invoke these methods. We now formally present them.

\smallskip
\noindent\textit{General public methods.} These include,

\begin{itemize}
\item \textbf{GetCandidateList.} This method returns the list of all the candidates participating in the election. This is simply a \textit{getter} method. For e.g., a function that is set to ``view" in solidity. The method requirese no authentication and can be invoked during $(t_{ecr},t_{bvc})$.


\item \textbf{GetEncryptionKey:} This method shares the encryption key with the caller who invokes it. It also requires no authentication. The caller is provided with an encryption ID and the corresponding encryption key. It can be invoked only in the duration $(t_{bvc}, t_{evc})$. We present the pseudo-code in Method~\ref{method:GEK}.

\begin{algorithm}[!t]
\DontPrintSemicolon
\renewcommand{\algorithmcfname}{Method}
require($block.timestamp \in (t_{bvc}, t_{evc})$) \;
Int $i \leftarrow idCounter+1$ \;
$idCounter = (idCounter + 1) \% numKeys$ \;
bytes32 $ek \leftarrow enKeys[i]$ \;
return $[i, ek]$ \;
\caption{\label{method:GEK}GetEncryptionKey}
\end{algorithm}

\item \textbf{CastVote:} Presented in Method~\ref{method:CV}, it allows the eligible voters to register their vote. Required parameters are token\footnote{This is the only time the voter is required to send a token to the smart contract.}, encryption id and encrypted vote. The token is validated after which the encrypted vote is stored corresponding to its encrypted id. This method can be used only in the duration $(t_{bvc}, t_{evc})$.

\begin{algorithm}[!t]
\DontPrintSemicolon
\renewcommand{\algorithmcfname}{Method}
\KwIn{Token $t$, Int $i$, bytes32 $ev$}
require($block.timestamp \in (t_{bvc}, t_{evc})$) \;
bytes32 $h \leftarrow  sha3(t)$ \;
require($hashDatabase[h] == true$)
$hashDatabase[h] = false$ \;
$voteBatch[i].push(ev)$ \;
\caption{\label{method:CV}CastVote}
\end{algorithm}

\item \textbf{GetDecryptionKeys.} Similar to Method~\ref{method:GEK}, this method shares the list of decryption keys with the caller who invokes it. The method requires no authentication \textit{but} can only be invoked after $t_{bvt}$. 

\item \textbf{TallyVote:} This decrypts the votes, counts them and returns the result (Method~\ref{method:VT}). It requires no authentication. Once the votes have been decrypted and counted, it returns the pre-computed result in the subsequent calls. This method can also only be invoked $t_{bvt}$.

\begin{algorithm}[!t]
\DontPrintSemicolon
\renewcommand{\algorithmcfname}{Method}
require($block.timestamp > t_{bvt}$) \;
\If{$tallyDone == false$} {
\For{Int $i = 0; i < numKeys; i++$} {
\For{bytes32 $ev: voteBatch[i]$} {
bytes32 $dk = deKeys[i]$
bytes32 $dv \leftarrow decrypt(ev)$  \;
bytes32 $candId = extract(dv)$ \;
$candTally[candId] += 1$ \;
}
}
$tallyDone = true$ \;
}
return $candTally$
\caption{VoteTally\label{method:VT}}
\end{algorithm}

\end{itemize}

\noindent\textit{Warden Specific Methods.} These include,

\begin{itemize}
\item \textbf{DepositSecurity.} The wardens invoke this method to deposit monetary value as security against their honest behavior and to confirm their registration. The method is presented in Method~\ref{method:DS}.
\begin{algorithm}[!t]
\DontPrintSemicolon
\renewcommand{\algorithmcfname}{Method}
require($wardens[msg.sender] > 0$) \;
require($block.timestamp < t_{bvc}$) \;
require($msg.value > securityAmt$) \;
$refundAmt[msg.sender] = msg.value - securityAmt$ \;
\caption{DepositSecurity\label{method:DS}}
\end{algorithm}

\item \textbf{SubmitEncryptionKey.} As described in Method~\ref{method:SEK}, this method allows the wardens to submit the encryption key in the required duration. 
\begin{algorithm}[!t]
\DontPrintSemicolon
\renewcommand{\algorithmcfname}{Method}
\KwIn{bytes32 $ek$}
uint $id = wardens[msg.sender]$ \;
require($id > 0$) \;
require($block.timestamp < t_{bvc}$) \;
require($refundAmt[msg.sender] > 0$) \;
$enKeys[id] = ek$ \;
\caption{SubmitEncryptionKey\label{method:SEK}}
\end{algorithm}

\item \textbf{SubmitDecryptionKey.} This method is used by the wardens to timely submit the correct decryption key. We present the method in Method~\ref{method:SDK}.
\begin{algorithm}[!t]
\DontPrintSemicolon
\renewcommand{\algorithmcfname}{Method}
\KwIn{bytes32 $dk$}
uint $id = wardens[msg.sender]$ \;
require($wardens[msg.sender] > 0$) \;
require($block.timestamp \in (t_{evc}, t_{bvt})$) \;
require($refundAmt[id] > 0$) \;
bytes32 $ek = enKeys[id]$ \; 
require($sampleText == decrypt(encrypt(sampleText, ek), dk)$) \;
$deKeys[id] = dk$ \;
refundAmt[msg.sender] += securityAmt + reward \;
\caption{SubmitDecryptionKey\label{method:SDK}}
\end{algorithm}

\item \textbf{WithdrawReward:} From Method~\ref{method:WR}, wardens can use this method to collect their reward after successful submission of decryption key.
\begin{algorithm}[!t]
\DontPrintSemicolon
\renewcommand{\algorithmcfname}{Method}
require($wardens[msg.sender] > 0$) \;
require($block.timestamp > t_{bvt}$) \;
uint $amt = refundAmt[msg.sender]$ \;
$refundAmt[msg.sender] = 0$ \;
\If{amt \textgreater  0}{$msg.sender.transfer(amt)$}
\caption{WithdrawReward\label{method:WR}}
\end{algorithm}

\end{itemize}

As aforementioned, in \proto, we use tokens distributed by EC to register a voter's vote. We now explain how we use those tokens to guarantee secure voting.

\subsection{\proto: Token Validation Process}

The tokens are distributed through an off-chain system. Token get pre-generated and stored securely and privately with EC. We assume that tokens will be distributed privately and securely by the respective EC. The voter provides the token only once while sending its encrypted vote. The smart contract keeps the hash\footnote{We suggest hash functions which are efficient over a smart contract such as \textit{sha3} \cite{wood14ethereum}.} of all the tokens in its on-chain database. It calculates the hash of the token provided by the voter and checks if the hash of the token exist in the database. If found, it removes the entry of the token from the database and registers the voter's vote.

\begin{algorithm}[!tb]{}
\DontPrintSemicolon
\renewcommand{\algorithmcfname}{Procedure}
\underline{\textbf{Vote Casting Window Opens}}\;
token $t \leftarrow getToken()$ \;
Candidate List $candList \leftarrow GetCandidateList()$ \;
Let $canList$ be the preferred candidate of Gretel from the list $candList$ \;
Encryption ID, Encryption Key $i, ek_{i} \leftarrow GetEncryptionKey()$ \;
Encrypted Vote $v \leftarrow Decrypt()$, encryption done off-chain by the voter \;
$CastVote(t, v, i)$\;
\underline{\textbf{Vote Casting Window Ends}} \;
\underline{\textbf{Vote Tallying window opens}}\;
Vote Count[] $vc \leftarrow TallyVote()$ \;
$vc_{i}$ represents vote count of $i^{th}$ candidate in the $canList$
Decryption Key List $dk \leftarrow GetDecryptionKeys()$ \;
where $dk_{i}$ represents decryption key associated with the encryption id $i$ \;
The list of decryption key can be used to get the vote count manually check the same with calculated by the contract.
\caption{Voting Procedure()}\label{voting-procedure}
\end{algorithm}

\subsection{\proto: Voting Procedure}
Now that we have provided the core protocol design, we give a sample walk-through of the protocol. A voter who wishes to use \proto, can use it by following Procedure~\ref{voting-procedure}.

We will now describe the way through which the decryption key is kept secret in \proto.

\subsection{\proto: Warden Assistance}
As a smart contract cannot store data privately, we take we take assistance from wardens outside the contract to timely submit the decryption keys. In \proto, wardens are appointed to keep the decryption keys off the chain. They provide the keys to smart contract through a special transaction when the time comes, i.e., during $(t_{evc}, t_{bvt})$.

Consider a set of wardens $W$. Each warden $w_{i}\in W$ has a decryption key $dk_{i}$ and is assigned a batch of votes which can be decrypted through that key. Trivially, $|W|$ are the total number of batches. Now, let $B$ be the dataset of batches, i.e., where $b_{i}\in B$ is the batch corresponding to $w_{i}$. Observe that every time GetEncryptionKey() (Method~\ref{method:GEK}) is called, the contract provides the voter with the encryption id $i$ corresponding to $w_{i}$ and an encryption key $ek_{i}$. The voter encrypts the vote using the encryption key and then sends the encrypted vote and ``$i$" as parameters to  CastVote() (Method~\ref{method:CV}). The contract then assigns the encrypted vote to $b_{i}$. As a result, after $t_{evc}$, if we have $n$ votes in total, every batch $b_{i}$ will hold roughly $\frac{n}{|W|}$ number of encrypted votes.

In \proto, we do not take for granted that the wardens will be honest. Towards this, we design our protocol so as to minimize the loss in case any warden turns out to be dishonest. Observe that any dishonest warden can cause trouble in two ways: aborting its duty and/or leaking the decryption key. We now provide our solution to these two problems.

\subsubsection{Abortion of Duty} Each warden $w_{i}$ submits a deposit $dk_{i}$ prior to its participation in \proto. The warden's address will be stored in the smart contract database. As a result, it will get its deposit refunded only if it timely submits the correct decryption key. The smart contract will verify the decryption key by trying the encryption-decryption key pair on some $r$ random strings. Once the decryption key is verified to be correct, the warden will get the deposit refunded. We also suggest to provide an additional bonus amount as a reward for honest behavior.

\subsubsection{Leaking Decryption Key} We remark that it is practically impossible to prevent any warden $w_{i}$ from leaking its decryption key $dk_{i}$. In the event that a warden $w_{i}$ leaks the decryption key, on average, it will leak the tally of only $\frac{n}{|W|}$ votes. Thus, we suggest to choose the number of wardens $|W|$ so as to minimize this loss to a very minute percentage. For example, having less than $\frac{1}{1000}$\textsuperscript{th} fraction of voters with each warden, will ensure that a single dishonest warden cannot leak tally of more than $0.1\%$ of votes. The number of wardens is a protocol parameter and can be increased depending on the budget and risk tolerance.

\section{\proto: Proof of Fairness and Security}
As aforementioned, an election is said to be fair if it satisfies the four properties: vote anonymity, vote concealment, vote immutability and double voting inhibition. That is, every voter is able to vote discretely and anonymously and the vote tally remains hidden until the end of the election. Further, nobody is able to cast the vote without authentication nor cast the vote twice. We now (i) provide the formal definitions; and (ii) prove that \proto\ satisfies all these properties.

\begin{definition}[Voter Anonymity] \label{claim:5.1}
An election protocol is said to satisfy the \textit{Voter Anonymity} property if probability of finding the vote of any voter is negligibly small as compared to the size of the population.
\end{definition}
\begin{claim}
\proto\ satisfies voter anonymity with probability of guessing any voters token being at most $\frac{n}{2^l}$ where $n$ is the number of voters and tokens are $l$-bit long.
\end{claim}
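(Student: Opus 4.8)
The plan is to reduce voter anonymity to the secrecy of the per-voter tokens, and then bound the probability that an adversary can guess a valid token. First I would observe that, by construction of \proto, the only datum stored on-chain that could tie a cast (and, after $t_{bvt}$, decrypted) vote to a particular real-world voter is the $l$-bit token submitted in CastVote (Method~\ref{method:CV}): the encryption id, the ciphertext, and the eventual plaintext vote all carry no identifier. Since EC distributes tokens off-chain, privately, and — by the standing assumption recorded in Section~3 — keeps no trace of the voter-to-token association, the state visible to any adversary is exactly \textit{hashDatabase} $=\{\mathit{sha3}(t_1),\dots,\mathit{sha3}(t_n)\}$ together with the list of (encrypted vote, encryption id) records. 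Hence de-anonymising a target voter $v$ is equivalent to recovering $v$'s token $t_v$ from this state.

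Next I would argue that the on-chain data yields no usable information about $t_v$. Modelling \textit{sha3} as a one-way function (equivalently, taking $l$ large enough — e.g.\ $l \geq 256$ — that exhaustive preimage search over $\{0,1\}^l$ is infeasible), the hash database hides each $t_i$, and the vote records are independent of the tokens' values; so the adversary's best strategy is a blind guess of an $l$-bit string $g$. Because EC draws each token uniformly from $\{0,1\}^l$, a single guess equals a fixed target $t_v$ with probability $2^{-l}$, and the probability that $g$ coincides with the token of \emph{some} voter is at most $n/2^l$ by a union bound over the $n$ issued tokens; the latter upper-bounds the former, which gives the claimed $\tfrac{n}{2^l}$. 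With $q$ guesses the bound is $qn/2^l$, still negligible relative to the population whenever $2^l \gg n$, which is precisely what Definition~\ref{claim:5.1} requires.

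The step needing the most care — the main obstacle — is justifying that the public hash database leaks nothing. One route is the one-wayness / random-oracle model for \textit{sha3} together with a large $l$, as above. A second, purely information-theoretic route is to note that even an unbounded adversary who inverts all $n$ hashes learns only the \emph{set} $\{t_1,\dots,t_n\}$ and not the assignment of tokens to voters, so $t_v$ remains uniform over $n$ unknowns; in that reading the quantity genuinely bounded by $\tfrac{n}{2^l}$ is the chance that a freshly guessed $l$-bit string happens to be a valid (issued) token. A minor remaining point is to state the independence assumptions explicitly (tokens i.i.d.\ uniform and pairwise distinct) and to note that insisting on distinctness makes each token only slightly non-uniform, which can only tighten the bound.
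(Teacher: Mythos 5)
Your argument follows essentially the same route as the paper's: the only intended link between a voter and a vote is the $l$-bit token, the token is issued off-chain with EC retaining no voter-to-token association, and therefore the only remaining attack is to guess a token, which succeeds with probability at most $n/2^l$. Your version is in fact more careful than the paper's on two points the paper glosses over: you explicitly bound the leakage from the public \textit{hashDatabase} (via one-wayness of \textit{sha3}, or information-theoretically by noting that inverting the hashes reveals only the unordered set of tokens), and you make the union bound over the $n$ issued tokens explicit, which is where the factor $n$ in $n/2^l$ actually comes from.

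There is one genuine omission. You assert that the token is ``the only datum stored on-chain that could tie a cast vote to a particular real-world voter,'' but every call to CastVote is an Ethereum transaction signed by and recorded against a sender address (\textit{msg.sender}), and that address is permanently on the public ledger next to the ciphertext and token. If a voter reuses an address already associated with their identity (e.g.\ one that has received funds from an exchange with KYC), the vote is trivially linkable regardless of how well the token is hidden. The paper closes this hole by an explicit assumption: it ``demands the voters to use a new Ethereum address to cast their encrypted vote,'' and then argues that fresh addresses are random-looking and carry no identity. Your proof needs the same assumption stated up front; without it the reduction of anonymity to token secrecy does not go through. With that assumption added, your argument is sound and, if anything, tightens the paper's.
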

\begin{proof}
The smart contract takes the token as an attribute from the voter to register their encrypted votes. We demand the voters to use a new Ethereum address to cast their encrypted vote since the old Ethereum address might have been compromised by the voter. Since the token is randomly generated and privately distributed off the blockchain, it cannot be linked back to the voter through the blockchain because Ethereum addresses are randomly generated and have no relation with the identity of the user. The public ledger will only show that a certain encrypted vote has been registered for a certain token. It implies that even after the vote is encrypted, it is related to only the token and has no links with the previous holder of the token. Thus the vote can never be traced back to the voter except possibly random guessing. If token is $l$-bit long, and size of the voter population is $n$, the random guessing will not succeed by probability more than $\frac{n}{2^l}$. By choosing, $l >> \log_2 n $, we can ensure voter anonymity.
\end{proof}

\begin{definition}[Vote Concealment] \label{claim:5.2}
An election protocol is said to satisfy the \textit{Vote Concealment Property} if no vote's value is revealed before the end of the election vote casting period with probability more than $\frac{k}{1000}$ if no more than $k$ wardens are dishonest.
\end{definition}

\begin{claim}
\proto\ satisfies vote concealment property. 
\end{claim}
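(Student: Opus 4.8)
The plan is to establish Vote Concealment by tracing exactly where a vote's value could leak before $t_{evc}$, and showing that every such avenue is sealed unless a warden misbehaves. First I would observe that throughout the interval $(t_{bvc}, t_{evc})$, the only object the smart contract ever stores relating to a vote is the ciphertext $ev$ pushed into $voteBatch[i]$ by \textbf{CastVote} (Method~\ref{method:CV}); the decryption key $dk_i$ corresponding to encryption id $i$ is held off-chain by warden $w_i$ and, by the protocol timeline, cannot be submitted to the contract before $t_{evc}$ (the \texttt{require} in \textbf{SubmitDecryptionKey}, Method~\ref{method:SDK}, enforces $block.timestamp \in (t_{evc}, t_{bvt})$), nor can \textbf{TallyVote} or \textbf{GetDecryptionKeys} be invoked before $t_{bvt}$. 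Hence the on-chain state before the casting window closes consists solely of ciphertexts under an asymmetric scheme (e.g.\ ElGamal) whose security rests on the hardness of discrete log; decrypting any $ev$ without the matching $dk_i$ succeeds only with negligible probability, so no information about any vote's value is exposed through the contract itself.

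Next I would handle the off-chain leakage channel, which is the substantive content of the claim. The only party other than the voter who possesses $dk_i$ before $t_{evc}$ is the warden $w_i$ assigned to batch $b_i$. An honest warden keeps $dk_i$ secret until the scheduled submission, so a vote's value leaks early only if the warden holding its batch is dishonest. By the \textbf{Warden Assistance} design, each warden is assigned roughly $n/|W|$ of the $n$ votes, and we choose $|W|$ so that each batch holds at most a $\tfrac{1}{1000}$ fraction of the votes. Therefore a single dishonest warden can reveal at most $\tfrac{1}{1000}$ of the votes, and if no more than $k$ wardens are dishonest, at most a $\tfrac{k}{1000}$ fraction of votes — equivalently, the probability that a uniformly chosen vote's value is revealed before the end of casting is at most $\tfrac{k}{1000}$, which is precisely the bound in Definition~\ref{claim:5.2}. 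I would also note in passing that the deposit-and-reward mechanism (Methods~\ref{method:DS}, \ref{method:SDK}, \ref{method:WR}) disincentivizes such misbehavior, though the bound itself is information-theoretic given the batch sizes and does not rely on incentives.

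I would then combine the two observations: before $t_{evc}$, a vote's value can be recovered either by breaking the encryption (negligible probability) or by obtaining $dk_i$ from the warden of its batch (possible only for a dishonest warden, affecting at most a $\tfrac{k}{1000}$ fraction), and since the former is negligible the overall bound is $\tfrac{k}{1000}$ as required. Concealment of the \emph{tally} then follows a fortiori: the tally is just the aggregate of values each of which is individually hidden, and \textbf{TallyVote} is time-locked until $t_{bvt}$.

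The main obstacle I anticipate is making the ``leaks at most a $\tfrac{n}{|W|}$-fraction'' step rigorous rather than heuristic: one must argue that the batches are genuinely disjoint (each encryption id corresponds to one warden, and \textbf{CastVote} never reassigns a ciphertext) and that a dishonest warden cannot learn votes outside its own batch — which requires that distinct batches use independently generated key pairs, so that $dk_i$ is useless against $ev$ in $voteBatch[j]$ for $j \ne i$. A secondary subtlety is the near-uniform load balancing: the $n/|W|$ figure is only an expectation under the round-robin \textbf{GetEncryptionKey} counter, so strictly one should either state the bound as ``roughly'' (matching the paper's informal tone) or add a concentration argument bounding the maximum batch size; I would opt for the former to keep the proof at the level of the rest of the paper.
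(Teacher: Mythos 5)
Your proposal is correct, and it is in fact a strictly more complete argument than the one the paper gives. The paper's own proof covers only your first step: votes are encrypted off-chain, only ciphertexts ever reach the contract during $(t_{bvc}, t_{evc})$, and the decryption keys are published only after the casting window closes — hence concealment. It stops there. It never engages with the $\frac{k}{1000}$ bound that Definition~\ref{claim:5.2} explicitly parameterizes by the number $k$ of dishonest wardens; that reasoning appears only informally in the earlier ``Leaking Decryption Key'' discussion of the warden-assistance design and is not folded into the proof of the claim. Your decomposition into an on-chain channel (sealed by the time-locked \texttt{require} guards and the hardness of decryption without $dk_i$) and an off-chain channel (bounded by batch sizes of roughly $n/|W| \le n/1000$ per dishonest warden, giving at most a $\frac{k}{1000}$ fraction of revealed votes) is exactly what is needed to prove the claim \emph{as the definition states it}, so your version buys a proof that actually matches the quantitative definition, at the cost of needing the two auxiliary facts you correctly flag: that distinct batches use independently generated key pairs (so $dk_i$ is useless against $voteBatch[j]$ for $j \ne i$), and that the round-robin counter in \textbf{GetEncryptionKey} makes $n/|W|$ only an approximate per-batch load. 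Both are consistent with the paper's stated design and its own informal tone, so neither is a gap in your argument relative to the paper's standard of rigor.
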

\begin{proof}

As we have highlighted earlier, the voters will register their encrypted vote to the contract. The encryption key and the corresponding encryption ID will be provided by the contract itself to the voters. The voter encrypts the vote through the encryption key off the blockchain on her device and sends it to the smart contract along with the encryption ID and the respective token. Since the vote is encrypted off the chain, the real value of the vote remains hidden until the decryption key is out. The decryption is made public only after the end of vote casting window, thereby concealing the vote until the voting period ends. Hence the property of Vote Concealment is preserved.
\end{proof}

\begin{definition}[Vote Immutability] \label{claim:5.3}
An election protocol is said to satisfy the \textit{Vote Immutability} property if no vote's value can be altered once it has been casted.
\end{definition}
\begin{claim}
\proto\ satisfies vote immutability under assumption that majority of the nodes in the network are honest.
\end{claim}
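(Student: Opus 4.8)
The plan is to establish immutability in two layers: first at the level of the smart-contract logic, and then at the level of the underlying blockchain.

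First I would examine every on-chain method that can touch the vote store \textit{voteBatch}. The only method that writes to it is \textbf{CastVote} (Method~\ref{method:CV}), and it does so exclusively through \textit{voteBatch[i].push(ev)}, i.e., an append; \textbf{TallyVote} (Method~\ref{method:VT}) only reads \textit{voteBatch}, and no other method (warden-specific or general-public) references it at all. Hence there is no code path exposed by the contract that overwrites or deletes an already-stored ciphertext. I would also note that the accompanying token is consumed on use (\textit{hashDatabase[h]} is flipped to \textit{false}), so an adversary cannot ``recast'' over an existing entry: any fresh \textbf{CastVote} call requires a fresh valid token and appends a \emph{new} entry rather than mutating an old one. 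Thus, from the contract's perspective, the multiset of recorded encrypted votes is append-only and, once the vote-casting window $(t_{bvc},t_{evc})$ closes, frozen.

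Second, I would argue that nothing outside the contract can alter a recorded vote either. Each invocation of \textbf{CastVote} is an Ethereum transaction, and once it is included in a block and that block is buried under sufficiently many confirmations, rewriting it would require producing a longer competing chain --- which, under the stated assumption that a majority of the network's nodes are honest, succeeds only with probability that decays exponentially in the number of confirmations, hence is negligible. Because the encrypted vote is exactly the data carried by that transaction and written to contract storage, its immutability reduces to the immutability of the transaction history itself. The same reasoning covers the tally: once the wardens submit decryption keys that pass the \textit{sampleText} check in \textbf{SubmitDecryptionKey} (Method~\ref{method:SDK}), the decryption map is pinned on-chain, so even the \emph{decrypted} value of each ballot is determined and cannot be retroactively changed.

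The main obstacle is not a calculation but a \emph{completeness} argument: I must be confident that the two layers above exhaust the ways a ballot's value could change --- i.e., that \proto\ introduces no new attack surface beyond the generic blockchain re-org risk, which is inherent to any on-chain protocol and is precisely why the claim is stated conditionally on an honest majority. I would close by addressing the short re-org window near $t_{bvc}$: a very recently cast vote could in principle be dropped by a transient fork, but this is the standard confirmation-latency caveat (mitigated by requiring votes to be sufficiently confirmed before $t_{evc}$), and it is an argument about \emph{inclusion}, not about \emph{alteration} of a settled vote, so it does not violate the immutability property as defined.
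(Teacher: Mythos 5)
Your proof is correct and takes essentially the same approach as the paper's: the paper likewise combines the 51\%-honest-majority immutability of the Ethereum ledger with the observation that a consumed token's hash is removed from the database so no vote can be overwritten. Your version is simply more explicit (append-only \textit{voteBatch}, the re-org caveat), but the decomposition and key ideas match.
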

\begin{proof}
Ethereum blockchain stores all the transactions permanently by default. Once a transaction is part of the public ledger, it cannot be removed or changed unless 51\% or more nodes are corrupt. Also, since once the hash of a token is matched to that of the database on the blockchain, it is immediately removed, thereby no overwriting of the vote is possible.
\end{proof}

\begin{definition}[Double Voting Inhibition] \label{claim:5.4}
An election protocol is said to satisfy the \textit{Double Voting Inhibition} property if probability that any voter can cast more than one vote in the election is negligible.
\end{definition}
\begin{claim}
\proto\ satisfies double voting inhibition property.
\end{claim}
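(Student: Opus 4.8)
The plan is to show that recording more than one vote for a single voter requires either (i) reusing a token that the smart contract has already consumed, which the \textbf{CastVote} logic rules out deterministically, or (ii) producing a second valid token, which a voter can do only with negligible probability. I would first invoke the model assumptions on EC: the \textbf{GetToken} facility authenticates the requester biometrically and is Sybil-resistant (a repeated request returns the same token), so each eligible voter obtains exactly one token for the election, and the pre-populated \textit{hashDatabase} contains the hashes of exactly these $n$ tokens.

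Next I would argue that a token can be spent at most once. Every successful call to \textbf{CastVote}$(t,i,ev)$ (Method~\ref{method:CV}) passes \texttt{require(hashDatabase[sha3(t)] == true)} and then immediately executes \texttt{hashDatabase[sha3(t)] = false}. Because Ethereum transactions are totally ordered and each updates contract state atomically, any later call carrying the same token $t$ (or any $t'$ with $sha3(t') = sha3(t)$) observes the entry set to \texttt{false} and is reverted --- the same serialization argument that prevents double-spending of a coin on the blockchain. Hence a voter cannot push a second vote using its own token.

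Then I would bound the only remaining avenue: a voter submitting a token $t'$ it was never issued, yet with $hashDatabase[sha3(t')] = true$. At any moment at most $n$ entries of the database are \texttt{true}, and tokens are drawn uniformly at random from $\{0,1\}^{l}$, so a single guess hits a still-valid token with probability at most $n/2^{l}$; over any polynomial number of attempts this remains negligible once $l \gg \log_2 n$, exactly as in the voter-anonymity claim. The alternative of inverting \textit{sha3} to find a preimage of some \texttt{true} hash is ruled out by the (second-)preimage resistance of the hash function. Combining the two cases, the probability that any voter casts more than one vote is negligible, and in fact the total number of recorded votes never exceeds the number of issued tokens $n$.

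The main obstacle --- really the place where the argument leans on an assumption rather than on the contract code --- is the first step: the guarantee that EC issues one and only one token per voter and keeps no voter-to-token link. Nothing on-chain can enforce this, so the claim is necessarily conditional on EC's honest token distribution (already assumed throughout \proto), and I would make that dependence explicit. A secondary subtlety worth noting is that a voter who \emph{voluntarily} hands its token to another party lets that party vote in its place, but this transfers rather than multiplies votes (the donor forfeits its own vote), so it does not violate the definition; it is a vote-selling concern orthogonal to DVI.
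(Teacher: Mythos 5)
Your proposal is correct and follows essentially the same two-part decomposition as the paper's proof: (i) a token is consumed (its hash entry removed from \textit{hashDatabase}) atomically upon a successful \textbf{CastVote}, so the same token cannot be spent twice; (ii) the only remaining attack is to submit a token one was never issued, which succeeds only with negligible probability. The one substantive difference is in how step (ii) is quantified. The paper argues via a birthday-paradox bound on the \emph{hash} length $h$, claiming an attacker needs roughly $2^{h/2}$ trials to hit a valid entry; you instead bound the per-guess success probability directly by $n/2^{l}$ over the \emph{token} space, mirroring the voter-anonymity claim, and separately invoke second-preimage resistance to rule out attacking the hash. Your accounting is arguably the cleaner one: the adversary here is trying to land in a fixed set of $n$ targets, not to find a collision between two sets it controls, so the union bound $n/2^{l}$ (plus preimage resistance) is the more directly applicable estimate, whereas the paper's $2^{h/2}$ figure is a generic collision bound that does not quite model this attack. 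Your explicit remarks on transaction serialization, on the conditional reliance on EC issuing exactly one token per voter, and on voluntary token transfer being orthogonal to DVI are all absent from the paper's proof but consistent with its stated assumptions; they strengthen rather than change the argument.
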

\begin{proof}
The smart contract has an on-chain database containing the pre-calculated hashes of the tokens distributed to the voters. The voters are required to send their tokens to register their vote. The smart contract calculates the hash of the received token and compares it with the ones stored in the database. If a match is found, the token hash is removed from the database and the vote is registered. Since the token is immediately removed from the database before registering the encrypted vote, we can safely claim that every token is used only once to register the vote, thus inhibiting double voting. The only possibility for an voter to double vote would be to cast a vote with its valid token as well as guess a random token whose hash matches with that one in database. If hash is of length $h$-bits, even by birthday-paradox the voter needs to try $2^{\frac{h}{2}}$ trials to guess a first valid token apart from its own. Typically hash sizes ($h$) are 256 bits. Hence, the probability of successful double voting is negligible.
\end{proof}

Thus, we have proved theoretically that \proto\ achieves fairness in election, i.e., \proto\ is a solution for FSE. Now we perform cost analysis of \proto.

\section{\proto: Protocol Analysis}
\subsection{Cost Analysis}
We analyze the cost in terms of ethereum gas, which is constant cost of network resources/utilisation. We use the gas estimation given in ethereum docs and ethereum rate card also given in the docs for the estimation of cost per vote \cite{}. We estimate the cost per vote in 2 stages: (i) Voter side cost; and (ii) Warder side cost.

\subsubsection{Voter Side Cost} In \proto, each voter has to follow a particular sequence of methods in order to cast the vote. From Procedure~\ref{voting-procedure}, the voter follows the given sequence: (i) GetCandidateList, which has its gas requirement as 26 units; (ii) GetEncryptionKey, which consumes 667 gas units; and (iii) CastVote, that consumes 739. In total, the vote casting consumes 1432 Gas units.

Further, TallyVote computes the vote count for each candidate when called for the first time and returns the result. After this, it returns the pre-computed result for subsequent calls. This optimization ensures that there are no redundant calculations. Therefore, every vote is decrypted and counted once. This ensures that TallyVote is equivalent to counting of a single vote, which we estimate as $300000$ gas units. The estimation is high as we encrypt votes using $160$-bit ElGamal encryption. Thus, the decryption cost is itself high. 

Fortunately, there is an easy optimization to this as well. We suggest to decrypt the vote on DApp instead of the smart contract. As smart contracts are mainly used for maintaining persistent states throughout the network and the decryption won't alter the state, the decryption can be done on the DApp. By avoiding the decryption cost, we get an upper bound of $1500$ gas units cost per vote for the voter side.

\subsubsection{Warden Side Cost} In \proto, each warden uses the following methods in the order given: (i) DepositSecurity, which requires 23 gas units; (ii) SubmitEncryptionKey, which requires 629 gas units; (iii) SubmitDecryptionKey, with 600755 gas units; and (iv) WithdrawReward, which requires 21629.  That is, each warden consumes $623036$ gas units in total. The reason for such high cost is the encryption and decryption operations done in \emph{SubmitDecryptionKey} method to check the authenticity of the decryption key. But as mentioned during the analysis for voter side's cost, one can avoid the cost of encryption and decryption by computing it on the DApp instead of a smart contract. This also reduces the warden side cost to an upper bound of $23036$ gas units. Suppose there exists total of $n$ voters, then every warden holds the decryption key for $\frac{n}{|W|}$ voters, on average. Therefore, cost per vote from warden's side comes out to be $\frac{23036\cdot |W|}{n}$ gas units.

For a reasonable choice of $\frac{n}{|W|}$, such as 1000, this cost comes to be $\approx23$ gas units. Adding this to the voter side cost calculated above, we can set the upper bound for the total cost per vote of in \proto\ in terms of Ethereum gas units as $1600$ (when $\frac{n}{|W|}=1000$). Further note that, $1600$ gas units corresponds to $0.000064$ ETH as 1 gas unit equals $4\cdot 10^{8}$~\cite{gas-eth}. As at the time of writing of this paper, we have $1\mbox{~ETH~}\approx 627$ USD, the overall cost comes out to be $0.040$ USD per vote.

\subsection{Time Analysis}
As shown, the total gas consumption in \proto\ is $1600$ per vote (when $\frac{n}{|W|}=1000$). The block gas limit at the time of writing of paper, for Ethereum, is $8*10^{6}$~\cite{block-gas-limit}. Thus, each Ethereum block can hold $5000$ votes. Further, at the time of writing this paper, each block takes $15s$ to get on the Ethereum Network~\cite{block-time}. This corresponds to a processing capacity of roughly $20000$ votes per hour\footnote{We remark that this estimate can be significantly improved by deploying a more scalable underlying blockchain consensus protocol such as \cite{eos_whitepaper,dfninity,Arora2020}.}.

\smallskip
\noindent\textit{Inference from Analysis.}
A rough estimate of the cost per vote is $0.04$ USD (when $\frac{n}{|W|}=1000$). This is significantly lower than the amount of money spent in countries all over the world for mass elections. For Example, the per vote election cost for UK European Parliament Election 2014 is $5.54$ USD \cite{uk-elections}.

\section{Conclusion and Discussion}
We showed that \proto\ is a solution for FSE through blockchain and smart contracts. It also provides a transparent decentralized system through which the stakeholders can verify the result. We also argued that our protocol is cost efficient compared to the existing election methods. In summary,

\begin{enumerate}[leftmargin=*]
\item \proto\ does not reveal voter identity at any stage of the process.
\item In \proto, votes cannot be revealed at any time before the vote casting window expires. Even the vote count is concealed from everyone until the window expires. Thus, \proto\ ensures that voting is not influenced at any stage.
\item \proto\ allows anyone to check the end-to-end voting procedure independently. This is because all the relevant information is on a public ledger. As proved, this is achieved without compromising voters' privacy as every vote is cryptographically secure and cannot be traced back to the voter.
\item As we do not assume nor use any constraints on voting population, \proto\ can be used for a large population. 
\item Our cost analysis shows that \proto\ is cost efficient as compared to exisiting protocols. 
\item Unlike prior works, in \proto\ voters do not have to commit and return to the protocol to reveal their votes.
\end{enumerate}

\subsection{Discussion}

\noindent\textit{Oracles: An alternative to Warden Assistance.}
As smart contracts cannot access and fetch data outside the blockchain, in \proto, we rely on wardens to provide the decryption keys. Another way to achieve this is through \textit{oracles}. An oracle is a third party service designed for smart contracts and can feed data from outside the blockchain to it, as and when required. However, relying on such third party oracles can compromise the distributed trust model of the underlying blockchain. As a result, in \proto, we leverage wardens. 

In order to use oracles, we need a way to ensure the data provided is genuine and not tempered. Oraclize \cite{oraclize} is one such service provider which claims to be \emph{provably honest}, i.e., which provides unaltered data to smart contracts. They do so by accompanying the returned data together with a document - referred as an ``authenticity proof" - which can be requested using the \textit{oraclize\_setProof} function provided by the service. The authenticity proofs build upon different technologies such as auditable virtual machines and trusted execution environments (refer \cite{oraclize-docs}).


\smallskip
\noindent\textit{Liquid Democracy.} The use of transferring voting rights to an informed voter has been referred as \emph{liquid democracy} \cite{kahng18}. Kahng \emph{et. al.} \cite{kahng18} showed existence of certain delegation mechanisms that can outperform traditional voting in terms of selecting better candidates.
We believe, such mechanisms can be easily implemented through \proto. It will need to do a transaction between the voter (willing to transfer voting rights) and the delegate to transfer its token. We leave security analysis of this for a  future work.

\bibliographystyle{unsrt}  
\bibliography{mybib}
\end{document}